\newcommand{\R}{\mathbb{R}}
\newcommand{\Z}{\mathbb{Z}}
\newcommand{\E}{\mathbb{E}}
\newcommand{\bx}{{\bf x}}
\newcommand{\by}{{\bf y}}
\newcommand{\bz}{{\bf z}}
\newcommand{\bv}{{\bf v}}
\newcommand{\bw}{{\bf w}}
\newtheorem{theorem}{Theorem}
\newtheorem{lemma}[theorem]{Lemma}
\newtheorem{corollary}[theorem]{Corollary}
\newcommand{\eps}{\epsilon}
\begin{document}
\title{A Reduction for Optimizing Lattice Submodular Functions with Diminishing Returns}
\author{
Alina Ene \thanks{Department of Computer Science and DIMAP,
University of Warwick, \texttt{A.Ene@warwick.ac.uk}.}
\and
Huy L. Nguyen\thanks{Toyota Technological Institute at Chicago,
{\tt hlnguyen@cs.princeton.edu}.}
}

\maketitle

\begin{abstract}
A function $f: \mathbb{Z}_+^E \rightarrow \mathbb{R}_+$ is DR-submodular if it satisfies $f(\bx + \chi_i) -f (\bx) \ge f(\by + \chi_i) - f(\by)$ for all $\bx\le \by, i\in E$. Recently, the problem of maximizing a DR-submodular function $f: \mathbb{Z}_+^E \rightarrow \mathbb{R}_+$ subject to a budget constraint $\|\bx\|_1 \leq B$ as well as additional constraints has received significant attention \cite{SKIK14,SY15,MYK15,SY16}.

In this note, we give a generic reduction from the DR-submodular setting to the submodular setting. The running time of the reduction and the size of the resulting submodular instance depends only \emph{logarithmically} on $B$. Using this reduction, one can translate the results for unconstrained and constrained submodular maximization to the DR-submodular setting for many types of constraints in a unified manner.
\end{abstract}

\section{Introduction}
Recently, constrained submodular optimization has attracted a lot of attention as a common abstraction of a variety of tasks in machine learning ranging from feature selection, exemplar clustering to sensor placement. Motivated by the use cases where there is a large budget of identical items, a generalization of submodular optimization to integer lattice is proposed by~\cite{SKIK14}. Previously, submodular functions has been generalized to lattices via the lattice submodular property. A function $f:\Z^E_+ \to \R_+$ is {\em lattice submodular} if for all $\bx,\by\in \Z^E_+$,
$$f(\bx)+f(\by) \ge f(\bx \vee \by) + f(\bx \wedge \by)$$

In the generalization due to~\cite{SKIK14,SY15}, a function $f$ is {\em DR-submodular} if it satisfies
$$f(\bx + \chi_i) -f (\bx) \ge f(\by + \chi_i) - f(\by)$$
for all $\bx\le \by, i\in E$ (diminishing return property), where $\chi_i$ is the vector in $\{0, 1\}^E$ that has a $1$ in the coordinate corresponding to $i$ and $0$ in all other coordinates.

It can be shown that any DR-submodular function is also lattice submodular (but the reverse direction is not necessarily true). Similar to submodular functions, the applications can be formulated as maximizing a DR-submodular function $f$ subject to constraints, such as a budget constraint $\max \{f(\bx) \colon \bx \in \mathbb{Z}_+^{E}, \|\bx\|_1 \le B\}$. While it is straightforward to reduce optimization of DR-submodular function with budget constraint $B$ to optimization of submodular function with $B\cdot E$ items, the goal of \cite{SKIK14} is to find algorithms for this setting with running time \emph{logarithmic} in $B$ rather than \emph{polynomial} in $B$, which follows from the straightforward reduction. Following \cite{SKIK14}, there have been several works extending problems involving submodular functions to the DR-submodular setting~\cite{SY15,MYK15,SY16}.

In this note, we give a generic reduction from the DR-submodular setting to the submodular setting. The running time of the reduction and the size of the resulting submodular instance depends only \emph{logarithmically} on $B$. Using this reduction, one can translate the results for unconstrained and constrained submodular maximization to the DR-submodular setting for many types of constraints in a unified manner.

\section{The Reduction}

\begin{lemma}
\label{lem:decomposition}
For any $n$, there is a decomposition $n = a_1 + a_2 +\ldots + a_t$ with $t\le 2\log n+1$ so that for any $q \le n$, there is a way to express $q$ as the sum of a subset of the multiset $\{a_1, \ldots, a_t\}$.
\end{lemma}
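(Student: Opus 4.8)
The plan is to use a binary-style decomposition topped off with a single remainder term that makes the sum come out to exactly $n$. Concretely, set $k = \lfloor \log_2 n \rfloor$ and take $a_i = 2^{i-1}$ for $i = 1, \dots, k$ together with one extra part $a_{k+1} = n - (2^k - 1)$. Since $\sum_{i=1}^{k} 2^{i-1} = 2^k - 1 \le n$, this $a_{k+1}$ is a positive integer, and $\sum_{i=1}^{k+1} a_i = (2^k - 1) + a_{k+1} = n$, so the decomposition is valid; the number of parts is $t = k+1 = \lfloor \log_2 n\rfloor + 1 \le 2\log n + 1$.

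The one inequality I would single out at the outset is $a_{k+1} \le 2^k$: because $k = \lfloor \log_2 n\rfloor$ we have $n \le 2^{k+1} - 1$, hence $a_{k+1} = n - 2^k + 1 \le 2^k$. Now fix $q$ with $0 \le q \le n$ and split into two cases. If $q \le 2^k - 1$, then the ordinary binary expansion of $q$ writes it as a subset sum of $\{1, 2, \dots, 2^{k-1}\} = \{a_1, \dots, a_k\}$, and we do not even need $a_{k+1}$. If instead $q \ge 2^k$, set $q' = q - a_{k+1}$; then $q' \ge 2^k - a_{k+1} \ge 0$ (this is exactly where $a_{k+1} \le 2^k$ enters) while $q' \le n - a_{k+1} = 2^k - 1$, so $q' \in \{0, 1, \dots, 2^k - 1\}$ is a subset sum of $\{a_1, \dots, a_k\}$, and adjoining the part $a_{k+1}$ expresses $q$ as a subset sum of the full multiset.

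I do not anticipate a real obstacle: the argument is a short case analysis, and the only things that need care are (i) that the split $q \le 2^k - 1$ versus $q \ge 2^k$ is exhaustive for integer $q$, (ii) verifying both $q' \ge 0$ and $q' \le 2^k - 1$ in the second case, and (iii) the degenerate case $n = 1$, where $k = 0$, there are no binary parts at all, and the decomposition is simply $\{1\}$, with $t = 1 \le 2\log 1 + 1$.
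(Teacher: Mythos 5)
Your proof is correct, but it uses a genuinely different decomposition from the paper's. The paper builds its multiset from the binary representation of $n = \sum_j b_j 2^j$ with top bit $2^m$: it takes an extra $1$ together with all powers $2^0,\ldots,2^{m-1}$ (these sum to $2^m$) plus one term for each lower-order set bit of $n$, and the subset-sum property is verified by a somewhat delicate argument that locates the largest bit position where $q$ differs from $n$, rounds $q$ up to a number $r$ agreeing with $n$ on the high bits, and then subtracts off the binary representation of $r-q$. Your construction --- the powers $1,2,\ldots,2^{k-1}$ with $k=\lfloor \log_2 n\rfloor$ plus a single remainder part $a_{k+1}=n-(2^k-1)$ --- is the standard binary-splitting trick for bounded multiplicities, and your two-case verification (either $q\le 2^k-1$ and plain binary expansion suffices, or $q\ge 2^k$ and you first subtract $a_{k+1}\le 2^k$ and land back in $\{0,\ldots,2^k-1\}$) is complete: the key inequalities $a_{k+1}\le 2^k$ and $n-a_{k+1}=2^k-1$ are exactly what is needed, and the degenerate case $n=1$ is handled. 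Your version is both shorter and quantitatively better, giving $t=\lfloor\log_2 n\rfloor+1$ parts instead of the paper's bound $2\log n+1$; nothing downstream (Corollary~\ref{cor:decomposition} or the reduction itself) requires the parts to be powers of two, so your decomposition could be substituted without change and would slightly shrink the resulting ground set $E'$.
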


\begin{proof}
Let $n = b_0 2^0 + b_1 2^1 + \ldots + b_m 2^m$ be the binary representation of $n$ with $b_m=1$. Let $b_{c_1}, b_{c_2}, ..., b_{c_p}$ be all the non-zeroes among $b_0, \ldots, b_{m-1}$. Let $a_1=1$, $a_i = 2^{i-2}$ for $2\le i \le m+1$, and $a_{m+1+j} =b_{c_j} 2^{c_j}$ for $1\le j \le p$. It is clear that $\sum_{i\le m+1} a_i = 2^m$ and $\sum_{i} a_i = n$.

Consider an arbitrary number $1 < q < n$.	Let $j$ be the largest bit that is 1 for $n$ but it is 0 for $q$ ($j$ must exist because $q < n$). Let $r$ be the number that agrees with $n$ on all bits larger or equal to $j$ and has all 0s for the smaller bits. We can form $r$ from $a_1,\ldots, a_{m+1}$ (which sum up to $2^m$) and the additional numbers from $\{a_{m+2}, \ldots, a_{m+1+p}\}$ corresponding to the bits equal to 1 from $j$ to $m-1$ in the binary representation of $r$. Notice that $r-q < 2^m$ and it can be written as a sum of numbers from $a_2, \ldots, a_{m+1}$ (just $(r-q)$'s binary representation). By removing those numbers from the representation of $q$ above, we obtain a subset of the $a_i$'s that sums to $q$.
\end{proof}

\begin{corollary}
\label{cor:decomposition}
For any $n$, there is a way to write $n = a_1 + a_2 +\ldots + a_t$ with $t\le 2\log n+1+1/\eps$ so that $a_i \le \eps n~\forall i$ and for any $q \le n$, there is a way to express $q$ as the sum of a subset of the multiset $\{a_1, \ldots, a_t\}$.
\end{corollary}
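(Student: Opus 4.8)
The plan is to combine Lemma~\ref{lem:decomposition}, applied once to a single ``low-order block'' of size roughly $\eps n$, with a linear number ($\approx 1/\eps$) of \emph{identical} copies of that block stacked on top of it. Concretely, set $w := \lfloor \eps n\rfloor$ (we may assume $\eps n \ge 1$, since otherwise $n$ has no decomposition into positive integers that are all $\le \eps n$ and the statement should be read with $\lceil \eps n\rceil$ in place of $\eps n$). Write $n = qw + s$ with $q := \lfloor n/w\rfloor \ge 1$ and $0 \le s < w$, and apply Lemma~\ref{lem:decomposition} to $w$ to get $w = b_1 + \cdots + b_{t_0}$ with $t_0 \le 2\log w + 1$ and such that every integer in $[0,w]$ is a subset sum of $\{b_1,\dots,b_{t_0}\}$. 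The proposed decomposition of $n$ is the multiset consisting of $b_1,\dots,b_{t_0}$, together with $q-1$ copies of $w$, together with $s$ (omitted if $s=0$).

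I would then verify the three required properties in turn. First, the parts sum to $w + (q-1)w + s = qw + s = n$. Second, every part is at most $w \le \eps n$: each $b_i \le w$ since the $b_i$ are positive and sum to $w$, each extra copy equals $w$, and $s < w$. Third, and most importantly, every $q^* \le n$ is a subset sum: if $q^* \le qw$, write $q^* = jw + r$ with $0 \le j \le q-1$ and $0 \le r < w$ (or use all $q-1$ copies together with the whole block if $q^* = qw$), and take $j$ copies of $w$ plus a subset of the $b_i$'s summing to $r$, which exists by Lemma~\ref{lem:decomposition}; if $qw < q^* \le n$, write $q^* = qw + r$ with $0 < r \le s$ and take all $q-1$ copies of $w$, the element $s$, and a subset of the $b_i$'s summing to $w - s + r \in [0,w]$. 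Equivalently, the set of attainable sums is $[0,qw]\cup[s,n] = [0,n]$.

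For the size bound, the decomposition uses $t_0 + (q-1) + 1 = t_0 + q \le 2\log w + 1 + q \le 2\log n + 1 + q$ parts, so it remains to bound $q = \lfloor n/w\rfloor \le n/w$, which is $1/\eps$ up to the rounding in $w = \lfloor \eps n\rfloor$; this is the only place where one has to be slightly careful with floors (and where one tacitly wants $\eps n$ not too small, which is anyway necessary for a valid decomposition to exist), and it gives the claimed $t \le 2\log n + 1 + 1/\eps$.

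The step I expect to be the main obstacle is the subset-sum (realizability) property, and more precisely the discipline it imposes on the construction. One cannot simply take the decomposition of $n$ produced by Lemma~\ref{lem:decomposition} and chop every part exceeding $\eps n$ into several smaller copies: doing so destroys subset-sum coverage (the attainable sums become a sparse set instead of a full interval), and even after patching it up the part count blows up by a $\log(1/\eps)$ factor per oversized part. Routing all of the ``fine'' coverage of $[0,w]$ through a single Lemma~\ref{lem:decomposition} block and placing only identical copies of $w$ above it is exactly what keeps the attainable sums a union of overlapping intervals while adding only $\approx 1/\eps$ new parts.
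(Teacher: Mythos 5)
Your block construction is correct as far as coverage goes, but it is a genuinely different (and more elaborate) route than the paper's, and the obstacle you cite to justify it is not a real obstacle. The paper's proof does exactly what you claim cannot work: it takes the decomposition of $n$ from Lemma~\ref{lem:decomposition} and, while some part exceeds $\eps n$, replaces a part $a_i > \eps n$ by the two parts $a_i - \eps n$ and $\eps n$. Splitting a part never destroys subset-sum coverage: any $q \le n$ that was expressible before a split is still expressible after it, by using either none or all of the pieces of the split part, so the property inherited from Lemma~\ref{lem:decomposition} is preserved verbatim. Moreover each split creates one new part of size exactly $\eps n$, and these are disjoint pieces of total mass $n$, so there are at most $1/\eps$ splits in total; the count grows additively by at most $1/\eps$, with no $\log(1/\eps)$-per-part blow-up and no need for any ``patching.'' So the one-line refinement argument suffices, and your intuition that the fine coverage must be routed through a single low-order block is what led you to a heavier construction.

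Concerning your own construction (one Lemma~\ref{lem:decomposition} decomposition of $w=\lfloor\eps n\rfloor$, plus $q-1$ copies of $w$, plus the remainder $s$): the interval-union argument for coverage is fine, but the size bound is where the proposal genuinely falls short of the stated constant, and the issue is more than ``being slightly careful with floors.'' The quantity $q=\lfloor n/\lfloor\eps n\rfloor\rfloor$ can exceed $1/\eps$ by an amount far larger than the available logarithmic slack whenever $\eps n$ is a small constant. For example, with $n=300$ and $\eps=1/101$ one gets $w=2$ and your decomposition has $151$ parts, while $2\log n+1+1/\eps\approx 118$. In that regime no decomposition into \emph{integer} parts can meet the bound at all (all integer parts are at most $2$, forcing at least $150$ of them); the paper sidesteps integrality by allowing parts of size exactly $\eps n$, which is harmless because, as noted above, coverage of every integer $q\le n$ is inherited from the unsplit decomposition. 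So your argument establishes the corollary only when $\eps n$ is sufficiently large, whereas the paper's splitting argument gives the stated bound directly.
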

\begin{proof}
We start with the decomposition of the above lemma and refine it until the condition $a_i\le \eps n~\forall i$ is satisfied. As long as there exists some $a_i > \eps n$, replace $a_i$ with two new numbers $a_i-\eps n$ and $\eps n$. Each replacement step produces a new term equal to $\eps n$ so the number of replacement steps is at most $1/\eps$. Thus, the number of terms in the decomposition is at most $2\log n + 1 + 1/\eps$.
\end{proof}

\paragraph{The reduction.}
Suppose we need to optimize $f$ over the domain $[B_1]\times[B_2]\times\cdots\times[B_E]$. By the above lemma, we can write $B_i = a_{i,1}+\ldots + a_{i, t_i}$ with $t_i \le 2\log B_i + 1$ and any number at most $B_i$ can be written as a sum of a subset of the $\{a_{i,j}\}_j$'s. Let $t = \sum_i t_i$. Consider a function $g$ defined on the ground set $E' = \bigcup_{i\in E} \{(i, 1), \ldots, (i, t_i)\}$ defined as follows. Consider $\by \in \{0,1\}^{E'}$. Let $x_i = \sum_{j} y_{i, j} a_j$ and we define $g(\by) := f(\bx)$.

By Lemma~\ref{lem:decomposition}, for any vector $\bx$, there is a vector $\by$ such that $x_i = \sum_{j} y_{i, j} a_{i,j}$ for all $i$. Thus, the set $\{g(\by)\}_{\by}$ captures all of $\{f(\bx)\}_{\bx}$. Next, we show that $g$ is submodular.

\begin{lemma}
The function $g$ is submodular.
\end{lemma}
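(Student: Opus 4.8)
The plan is to verify the diminishing-returns characterization of submodularity for $g$ directly, by pulling it back to a ``large-step'' version of the diminishing-returns property of $f$, which in turn follows from the single-step property by telescoping.

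First, recall that a set function $g : \{0,1\}^{E'} \to \R_+$ is submodular if and only if $g(\by + \chi_{(i,j)}) - g(\by) \ge g(\by' + \chi_{(i,j)}) - g(\by')$ whenever $\by \le \by'$ in $\{0,1\}^{E'}$ and $(i,j)$ is a coordinate with $y'_{(i,j)} = 0$ (hence also $y_{(i,j)} = 0$). So I would fix such $\by, \by', (i,j)$ and let $\bx, \bx' \in \Z_+^E$ be the associated vectors, $x_k = \sum_{\ell} y_{k,\ell}\, a_{k,\ell}$ and $x'_k = \sum_{\ell} y'_{k,\ell}\, a_{k,\ell}$. Since every $a_{k,\ell}$ is a positive integer and $\by \le \by'$, we get $\bx \le \bx'$. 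Moreover, incrementing the coordinate $(i,j)$ in $\by$ (resp.\ in $\by'$) changes the associated vector from $\bx$ to $\bx + a_{i,j}\chi_i$ (resp.\ from $\bx'$ to $\bx' + a_{i,j}\chi_i$), and by construction these still lie in $[B_1]\times\cdots\times[B_E]$. Hence, writing $a := a_{i,j}$, the inequality to be proved becomes
$$f(\bx + a\chi_i) - f(\bx) \ \ge\ f(\bx' + a\chi_i) - f(\bx').$$

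It remains to establish this large-step inequality for any $\bx \le \bx'$, coordinate $i$, and positive integer $a$. Here I would telescope: write $f(\bx + a\chi_i) - f(\bx) = \sum_{\ell=0}^{a-1} \bigl( f(\bx + (\ell+1)\chi_i) - f(\bx + \ell\chi_i) \bigr)$ and likewise for $\bx'$. For each $\ell$ we have $\bx + \ell\chi_i \le \bx' + \ell\chi_i$, so the single-step DR property of $f$ applied to these two points and coordinate $i$ gives $f(\bx + \ell\chi_i + \chi_i) - f(\bx + \ell\chi_i) \ge f(\bx' + \ell\chi_i + \chi_i) - f(\bx' + \ell\chi_i)$. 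Summing over $\ell = 0, 1, \ldots, a-1$ yields the large-step inequality, and hence the submodularity of $g$.

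The argument is essentially routine; the only point that needs care is keeping the two encodings straight --- confirming that $\by \le \by'$ translates to $\bx \le \bx'$, and that flipping a single $\{0,1\}^{E'}$-coordinate corresponds exactly to an $a_{i,j}$-sized increment of one coordinate of $\bx$ --- after which the DR property of $f$ does all the work. I do not expect any genuine obstacle.
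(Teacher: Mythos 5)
Your proof is correct and follows essentially the same route as the paper: both verify the marginal-gain characterization of submodularity for $g$ by translating a single flip of coordinate $(i,j)$ into an $a_{i,j}$-sized increment of coordinate $i$ of $\bx$, and then invoke the diminishing-returns property of $f$ for comparable points $\bx \le \bx'$. The only difference is that you spell out the telescoping argument for the large-step inequality, which the paper treats as an immediate consequence of the DR property.
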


\begin{proof}
Consider 2 vectors $\by, \by' \in \{0,1\}^{E'}$ such that $y_{i,j} \le y'_{i,j}$ for all $i, j$. Consider an arbitrary element $(i_0,j_0) \in E'$ that is not in $\by'$. Let $\bx$ defined as $x_i = \sum_{j} y_{i, j} a_{i,j}$ and $\bx'$ defined as $x'_i = \sum_{j} y'_{i, j} a_{i,j}$. We have $g(\by + \chi_{(i_0, j_0)}) - g(\by) = f(\bx + a_{i_0,j_0}\chi_{i_0}) - f(\bx)$ and $g(\by' + \chi_{(i_0, j_0)}) - g(\by') = f(\bx' + a_{i_0,j_0}\chi_{i_0}) - f(\bx')$. By the diminishing return property, we have
$$f(\bx + a_{i_0,j_0}\chi_{i_0}) - f(\bx) \ge f(\bx' + a_{i_0,j_0}\chi_{i_0}) - f(\bx').$$
\end{proof}

\section{Modeling Constraints}

We are interested in maximizing $f(\bx)$ subject to constraints. In this section, we show how to translate constraints on $\bx$ to constraints for maximizing $g(\by)$.

\paragraph{Cardinality constraint.} The constraint $\sum_i x_i \le K$ with $K > 1/\eps$ can be translated to
$$\sum_{i,j} a_{i,j} y_{i,j} \le K.$$
By applying Corollary~\ref{cor:decomposition}, we map from a cardinality constraint to a knapsack constraint where all weights are at most an $\eps$ fraction of the budget.

\paragraph{Knapsack constraint.} The knapsack constraint $\sum_i c_i x_i \le K$ can be translated to
$$\sum_{i,j} c_i a_{i,j} y_{i,j} \le K.$$

\paragraph{General constraints.} Consider the problem $\max\{f(\bx) \colon \bx \in \mathcal{I}\}$, where $\mathcal{I} \subseteq [B_1] \times [B_2] \times \ldots \times [B_E]$ denotes the set of all solutions that satisfy the constraints.

We can apply algorithmic frameworks from the submodular setting --- such as the frameworks based on continuous relaxations and rounding \cite{Vondrak08,ChekuriVZ14} --- to the DR-submodular setting as follows. Let $\mathcal{P} \subseteq \mathbb{R}_+^{E}$ be a relaxation of $\mathcal{I}$ that satisfies the following conditions:
\begin{itemize}
  \item $\mathcal{P}$ is downward-closed: if $\bx \leq \bz$ and $\bz \in \mathcal{P}$ then $\bx \in \mathcal{P}$.
  \item There is a separation oracle for $\mathcal{P}$: given $\bx$, there is an oracle that either correctly decides that $\bx \in \mathcal{P}$ or otherwise returns a hyperplane separating $\bx$ from $\mathcal{P}$, i.e., a vector $\bv \in \mathbb{R}^{E}$ and $D \in \mathbb{R}$ such that $\left<\bv, \bx\right> \geq D$ and $\left<\bv, \bz \right> <  D$ for all $\bz \in \mathcal{P}$.
\end{itemize}

We apply Lemma~\ref{lem:decomposition} (or Corollary~\ref{cor:decomposition}) to obtain the multiset $\{a_{i, j}\}_{i, j}$ such that, for any vector $\bx$, there is a vector $\by$ such that $x_i = \sum_{j} y_{i, j} a_{i,j}$ for all $i$. Define the linear function $M:\R^{E'} \to \R^{E}$ where $\bx = M(\by)$ is computed according to $x_i = \sum_{j} y_{i, j} a_{i,j}~\forall i$. Let $g: 2^{E'} \to \mathbb{R}_+$ be the submodular function given by the reduction. Let $G: [0, 1]^{E'} \rightarrow \mathbb{R}_+$ be the multilinear extension of $g$:
  $$G(\by) = \E[g(R(\by))],$$
where $R(\by)$ is a random set that contains each element $e \in E'$ independently at random with probability $y_e$.

Thus we obtain the following fractional problem: $\max\{G(\by) \colon \by\in [0,1]^{E'}, M(\by) \in \mathcal{P}\}$. As shown in the following lemma, we can use the separation oracle for $\mathcal{P}$ to maximize a linear objective $\left<\bw, \by\right>$, where $\bw \in \mathbb{R}^{E'}$, subject to the constraints $\by\in[0,1]^{E'}$ and $M(\by) \in \mathcal{P}$.

\begin{lemma} \label{lem:sep-oracle}
	Using the separation oracle for $\mathcal{P}$ and an algorithm such as the ellipsoid method, for any vector $\bw \in \mathbb{R}^{E'}$, one can find in polynomial time a vector $\by \in \mathbb{R}^{E'}$ that maximizes $\left<\bw, \by\right>$ subject to $\by\in [0,1]^{E'}$ and $M(\by) \in \mathcal{P}$.
\end{lemma}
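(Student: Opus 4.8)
The plan is to reduce the statement to the standard equivalence between separation and optimization. Write $Q := \{\by \in [0,1]^{E'} \colon M(\by) \in \mathcal{P}\}$ for the feasible region of the fractional problem. I would (i) check that $Q$ is a bounded convex body, (ii) build a polynomial-time separation oracle for $Q$ out of the given separation oracle for $\mathcal{P}$, and (iii) feed this oracle to the ellipsoid method to maximize the linear objective $\langle \bw, \cdot \rangle$ over $Q$.

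For (i): $Q$ is contained in $[0,1]^{E'}$, hence bounded, and it is the intersection of the box with the preimage $M^{-1}(\mathcal{P})$, which is convex since $M$ is linear and $\mathcal{P}$ is convex. For (ii): on a query point $\by$, if some coordinate $y_{i,j}$ violates $0 \le y_{i,j} \le 1$, return the corresponding box inequality. Otherwise compute $\bx = M(\by)$, which takes polynomial time since $M$ is the explicit linear map $x_i = \sum_j y_{i,j} a_{i,j}$, and invoke the separation oracle for $\mathcal{P}$ on $\bx$. If it certifies $\bx \in \mathcal{P}$, then $\by \in Q$ and we report feasibility. Otherwise it returns $\bv \in \R^E$ and $D \in \R$ with $\langle \bv, \bx \rangle \ge D$ and $\langle \bv, \bz \rangle < D$ for all $\bz \in \mathcal{P}$; I would pull this hyperplane back through $M$ using
$$\langle \bv, M(\by) \rangle = \sum_i v_i \sum_j a_{i,j} y_{i,j} = \sum_{i,j} (v_i a_{i,j}) y_{i,j},$$
and set $v'_{i,j} := v_i a_{i,j}$, so that $\langle \bv', \by \rangle \ge D$ while every $\by' \in Q$ satisfies $M(\by') \in \mathcal{P}$ and hence $\langle \bv', \by' \rangle = \langle \bv, M(\by') \rangle < D$. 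Thus $(\bv', D)$ separates $\by$ from $Q$, and the whole procedure runs in polynomial time.

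For (iii): a polynomial-time separation oracle for a bounded, well-described convex body yields, via the ellipsoid method, a polynomial-time algorithm for maximizing a linear function over it; applying this to $Q$ with objective $\langle \bw, \cdot \rangle$ produces the claimed $\by$. I expect the delicate points to be the exact pullback of the separating hyperplane --- in particular, preserving the strict inequality that holds on all of $\mathcal{P}$ --- and the customary technical preconditions of the ellipsoid method (nonemptiness, attainment of the maximum, polynomial encoding length of the vertices of $Q$, and, if $Q$ fails to be full-dimensional, the usual perturbation/rounding step), all of which we take to be ensured by $\mathcal{P}$ being well-described.
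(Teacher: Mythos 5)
Your proposal is correct and follows essentially the same route as the paper: separate the box constraints directly, pull the separating hyperplane for $\mathcal{P}$ back through the linear map $M$ via its adjoint ($v'_{i,j} = v_i a_{i,j}$), and then invoke the ellipsoid method on the resulting separation oracle for the feasible region. The extra remarks on convexity, boundedness, and the standard ellipsoid preconditions are fine additions but do not change the argument.
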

\begin{proof}
	It suffices to verify that the separation oracle for $\mathcal{P}$ allows us to separate over $\{\by \colon \by\in[0,1]^{E'}, M(\by) \in \mathcal{P}\}$. To this end, let $\by$ be a vector in $\mathbb{R}^{E'}$. Separation for the constraint $\by\in[0,1]^{E'}$ can be done trivially by checking if every coordinate of $\by$ is in $[0,1]$. Thus, we focus on separation for the constraint $M(\by)\in \mathcal{P}$. Using the separation oracle for $\mathcal{P}$, we can check whether $M(\by) \in \mathcal{P}$. If yes, then we are done. Otherwise, the oracle returns $\bv \in \mathbb{R}^E$ and $D \in \mathbb{R}$ such that $\left<\bv, M(\by)\right> \geq D$ and $\left<\bv, \bz\right> < D$ for all $\bz \in \mathcal{P}$. Let $\bv' \in \mathbb{R}^{E'}$ be $\bv' = M^*(\bv)$, where $M^*$ is the adjoint of $M$ i.e. $v'_{i,j} = a_{i,j} v_i~\forall i,j$. Then $\left<\bv', \by\right> = \left<M^*(\bv), \by\right> = \left<\bv, M(\by)\right> \geq D$. Now let $\by'$ be a vector in $\mathbb{R}^{E'}$ such that $M(\by') \in \mathcal{P}$. We have $\left<\bv',\by'\right> = \left<M^*(\bv), \by'\right> = \left<\bv, M(\by')\right> < D$. Thus $(\bv', D)$ is a hyperplane separating $\by$ from $\{\by' \colon M(\by') \in \mathcal{P}\}$.
\end{proof}

Since we can solve $\max\{\left<\bw, \by\right> \colon \by\in[0,1]^{E'}, M(\by) \in \mathcal{P}\}$, where $\bw \in \mathbb{R}^{E'}$, we can approximately solve the fractional problem $\max\{G(\by) \colon \by\in[0,1]^{E'}, M(\by) \in \mathcal{P}\}$ using the (measured) Continuous Greedy algorithm or local search \cite{Vondrak08,ChekuriVZ14,FeldmanNS11}.

We note that in some settings, such as when $\mathcal{P}$ is a polymatroid polytope\footnote{Let $\rho: 2^E \rightarrow \mathbb{Z}_+$ be a monotone submodular function with $\rho(\emptyset) = 0$. The polymatroid associated with $\rho$ is the polytope $\mathcal{P} = \{\bx \in \mathbb{R}^E_+ \colon \sum_{i \in S} x_i \leq \rho(S) \quad \forall S \subseteq E\}$.}, we can round the resulting fractional solution to the problem $\max\{G(\by) \colon M(\by) \in \mathcal{P}\}$ and obtain an integral solution (similarly to \cite{SY16}); in this case, the rounding preserves the value of the fractional solution and thus we obtain an $\alpha$-approximation for the problem $\max\{f(\bx) \colon \bx \in \mathcal{I}\}$, where $\alpha = 1 - 1/e$ for monotone functions and $\alpha = 1/e$ for non-monotone functions. The detailed proof is in Theorem~\ref{thm:polymatroid}. 

\paragraph{Some examples of results.} Using the reduction above, we immediately get algorithms for maximizing DR-submodular functions subject to various types of constraints. We include a few examples below.

\begin{theorem}
There is a $1/2$ approximation algorithm for unconstrained DR-submodular maximization with running time $O(n+\sum_i \log B_i)$.
\end{theorem}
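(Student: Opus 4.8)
The plan is to compose the reduction of Section~2 with a known linear-time $\tfrac12$-approximation for unconstrained submodular maximization. First I would apply Lemma~\ref{lem:decomposition} to each budget $B_i$ to obtain multisets $\{a_{i,1},\dots,a_{i,t_i}\}$ with $t_i \le 2\log B_i + 1$, set $E' = \bigcup_{i} \{(i,1),\dots,(i,t_i)\}$ so that $|E'| = \sum_i t_i = O\bigl(n + \sum_i \log B_i\bigr)$ where $n = |E|$, and form the set function $g \colon 2^{E'} \to \R_+$ with $g(\by) = f(\bx)$ for $x_i = \sum_j y_{i,j} a_{i,j}$. By the reduction, $g$ is submodular; it is nonnegative because $f$ is; and since every value $f(\bx)$ is attained by some $g(\by)$ and conversely $g(\by)=f(M(\by))$, we have $\max_{S \subseteq E'} g(S) = \max_{\bx} f(\bx) =: \mathrm{OPT}$.

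Next I would run the randomized double-greedy algorithm for unconstrained nonnegative submodular maximization (Buchbinder--Feldman--Naor--Schwartz) on $g$. It performs a single pass over $E'$, uses $O(|E'|)$ value queries to $g$, and returns a set $S$ with $\E[g(S)] \ge \tfrac12 \mathrm{OPT}$. Setting $x_i = \sum_{j \colon (i,j) \in S} a_{i,j}$ yields $f(\bx) = g(S) \ge \tfrac12 \mathrm{OPT}$, so $\bx$ is the claimed $\tfrac12$-approximate solution; submodularity and nonnegativity of $g$ are all the algorithm needs, and in particular monotonicity of $f$ is not assumed.

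What remains --- and where the only real care is needed --- is the running time. Building all decompositions costs $O(\sum_i \log B_i)$, reading off the $a_{i,j}$ from the binary expansions of the $B_i$. The double-greedy pass has $O(|E'|) = O\bigl(n + \sum_i \log B_i\bigr)$ iterations, so it suffices to make each iteration cost $O(1)$ plus a single call to the value oracle for $f$. A naive evaluation of $g$ rebuilds the vector $\bx$ in $\Theta(|E'|)$ arithmetic, which would make the total quadratic; the fix is that double greedy flips exactly one coordinate of its current candidate set per step, so I would maintain the two integer vectors $\bx^{X}, \bx^{Y}$ associated with its growing set $X$ and shrinking set $Y$ incrementally --- each flip of a coordinate $(i,j)$ changes a single entry of $\bx^{X}$ or $\bx^{Y}$ by $\pm a_{i,j}$ in $O(1)$ time, after which one query to $f$ gives the marginal that double greedy needs. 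With this bookkeeping the algorithm runs in time $O\bigl(n + \sum_i \log B_i\bigr)$.

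The conceptual content is entirely in Section~2; the only obstacle is this last accounting step, namely ensuring that the reduction adds only $O(1)$ amortized overhead per oracle query, which the incremental maintenance of $\bx^{X}$ and $\bx^{Y}$ achieves. (If a deterministic guarantee is desired, one instead plugs in the derandomized double greedy, which achieves the same $\tfrac12$ ratio with the same asymptotic query complexity, and the argument is unchanged.)
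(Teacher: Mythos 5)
Your proposal is correct and follows exactly the paper's route: apply the Lemma~\ref{lem:decomposition} reduction to get a submodular instance on $O(n+\sum_i \log B_i)$ items and run the Double Greedy algorithm of \cite{BuchbinderFNS15}. Your additional bookkeeping argument (incrementally maintaining the vectors so each oracle query costs $O(1)$ overhead) is a detail the paper leaves implicit, and it is handled correctly.
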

\begin{proof}
By the reduction using Lemma~\ref{lem:decomposition}, we need to solve an unconstrained submodular maximization with $O(n+\sum_i \log B_i)$ items. The result follows from applying the Double Greedy algorithm of \cite{BuchbinderFNS15} to the resulting instance of unconstrained submodular maximization.
\end{proof}

\begin{theorem}
There is a $1-1/e-\eps$ approximation algorithm for maximizing a monotone DR-submodular function subject to a cardinality constraint $B$ with running time $O(m\log(m/\eps)/\eps)$ where $m = n/\eps +\sum_i \log B_i$.
\end{theorem}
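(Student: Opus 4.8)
The plan is to instantiate the reduction of Section~2 with accuracy parameter $\eps$, observe that the cardinality constraint becomes a knapsack constraint in which every item weighs at most an $\eps$-fraction of the budget, and then run a fast thresholding-greedy algorithm for monotone submodular maximization under such a knapsack constraint. The approximation guarantee and the running time of the reduced instance then transfer back to the original DR-submodular problem.

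First I would apply Corollary~\ref{cor:decomposition} coordinate-by-coordinate with parameter $\eps$, obtaining $B_i = a_{i,1} + \cdots + a_{i,t_i}$ with $t_i \le 2\log B_i + 1 + 1/\eps$ and $a_{i,j} \le \eps B_i \le \eps B$ for all $j$. The ground set $E'$ of the reduced instance then has size $m := \sum_i t_i = O(n/\eps + \sum_i \log B_i)$, as claimed. The reduction produces a submodular $g : 2^{E'} \to \R_+$, which is moreover monotone: monotonicity of $f$ makes every marginal $f(\bx + a_{i,j}\chi_i) - f(\bx)$ nonnegative, hence $g(\by + \chi_{(i,j)}) - g(\by) \ge 0$. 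By Corollary~\ref{cor:decomposition} the set of values $\{g(\by)\}_\by$ captures $\{f(\bx)\}_\bx$, and by the discussion of the cardinality constraint the constraint $\|\bx\|_1 \le B$ becomes exactly the knapsack constraint $\sum_{i,j} a_{i,j} y_{i,j} \le B$ on $\by \in \{0,1\}^{E'}$, in which every weight $a_{i,j}$ is at most $\eps B$.

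Next I would run a fast thresholding-greedy algorithm (e.g.\ that of Badanidiyuru and Vondr\'ak) on the instance $(g, E', \{\by : \sum_{i,j} a_{i,j} y_{i,j} \le B\})$. Recall such an algorithm maintains a density threshold $\tau$, initialized to the maximum singleton density $\max_{e} g(\{e\})/w_e$ (where $w_{(i,j)} = a_{i,j}$), sweeps $E'$ adding any element whose current marginal density exceeds $\tau$ and still fits in the remaining budget, then multiplies $\tau$ by $(1-\eps)$, stopping once $\tau$ drops below $\tfrac{\eps}{m}\max_{e} g(\{e\})/w_e$. There are $O\!\left(\tfrac{1}{\eps}\log\tfrac{m}{\eps}\right)$ threshold levels, each costing $O(m)$ value-oracle queries, which gives the stated running time $O\!\left(\tfrac{m}{\eps}\log\tfrac{m}{\eps}\right)$. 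For the guarantee, the key point is the $\eps$-small-weights property: when the algorithm stops it has either spent at least a $(1-\eps)$-fraction of the budget or every remaining item has density within a $(1-\eps)$-factor of the best available, so the standard density-greedy argument (integrating the differential inequality "value gained per unit budget $\ge (\mathrm{OPT}' - \text{current value})/B$") yields $g(S) \ge \bigl(1 - 1/e - O(\eps)\bigr)\cdot \mathrm{OPT}'$, where $\mathrm{OPT}'$ is the optimum of the reduced knapsack problem; rescaling $\eps$ by a constant makes the loss exactly $\eps$.

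Finally I would translate back. The output $S \subseteq E'$ gives $\bx$ with $x_i = \sum_{(i,j)\in S} a_{i,j}$, which satisfies $\|\bx\|_1 = \sum_{(i,j)\in S} a_{i,j} \le B$ and $f(\bx) = g(S)$; conversely, a maximizer $\bx^\star$ of $f$ with $\|\bx^\star\|_1 \le B$ has a representing $\by^\star \in \{0,1\}^{E'}$ (by Lemma~\ref{lem:decomposition}/Corollary~\ref{cor:decomposition}) with $\sum_{i,j} a_{i,j} y^\star_{i,j} = \|\bx^\star\|_1 \le B$ and $g(\by^\star) = f(\bx^\star)$, so $\mathrm{OPT}' \ge f(\bx^\star)$. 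Hence $f(\bx) = g(S) \ge (1-1/e-\eps)\,\mathrm{OPT}' \ge (1-1/e-\eps) f(\bx^\star)$. The main obstacle is the approximation analysis in the previous paragraph: one must verify carefully that the $\eps$-small-weights property is precisely what converts the clean $1-1/e$ bound for the fractional/continuous knapsack greedy into a $1-1/e-O(\eps)$ bound for the integral problem, and that the $O\!\left(\tfrac{m}{\eps}\log\tfrac{m}{\eps}\right)$ running time of the thresholding procedure is unaffected by passing from a cardinality to a (small-weight) knapsack constraint; everything else is bookkeeping through the reduction.
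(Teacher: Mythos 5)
Your proposal follows the paper's proof almost exactly: apply Corollary~\ref{cor:decomposition}, turn the cardinality constraint into a knapsack constraint on $O(n/\eps+\sum_i\log B_i)$ items in which every weight is at most an $\eps$ fraction of the budget, and run the descending-thresholds density greedy of Badanidiyuru--Vondr\'ak, whose small-weights analysis gives $1-1/e-O(\eps)$ in $O(m\log(m/\eps)/\eps)$ queries; your back-and-forth translation of solutions and the monotonicity of $g$ are exactly the bookkeeping the paper leaves implicit.

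The one genuine omission is the regime $B\le 1/\eps$. Your argument needs every weight $a_{i,j}$ to be at most $\eps B$, but the weights are positive integers, so this is impossible once $\eps B<1$; correspondingly, Corollary~\ref{cor:decomposition} cannot produce integral pieces of size at most $\eps B_i$ when $\eps B_i<1$ (its refinement step would create fractional terms, which are meaningless on the integer lattice). The paper sidesteps this by a case split: for $B\le 1/\eps$ it uses the trivial reduction that makes $B$ copies of each item (at most $nB\le n/\eps\le m$ items) and runs the standard thresholding greedy for a cardinality constraint, reserving the Corollary-based reduction for $B>1/\eps$. You should add this case split; also note that your inequality $a_{i,j}\le \eps B_i\le \eps B$ implicitly assumes $B_i\le B$, which is fine after truncating each $B_i$ to $\min(B_i,B)$ (harmless under the cardinality constraint, and the paper makes the same implicit assumption), but it deserves a sentence.
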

\begin{proof}
If $B \le 1/\eps$, the result follows via the trivial reduction of making $B$ copies of every item. Next, we consider the case $B>1/\eps$.
By the reduction using Corollary~\ref{cor:decomposition}, we need to solve a submodular maximization problem with a knapsack constraint where all weights are at most $\eps$ times the budget and there are $O(n/\eps+\sum_i \log B_i)$ items. The result follows from applying the Density Greedy algorithm with either descending thresholds or lazy evaluation \cite{BadanidiyuruV14}.
\end{proof}

\begin{theorem}
	There is an $\alpha$ approximation algorithm for maximizing a DR-submodular function subject to a polymatroid constraint with running time that is polynomial in $n$ and $\sum_i \log B_i$, where $\alpha = 1 - 1/e$ if the function is monotone and $\alpha = 1/e$ otherwise.
	\label{thm:polymatroid}
\end{theorem}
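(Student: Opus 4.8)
The plan is to run the reduction, solve the multilinear relaxation of the resulting submodular instance with a continuous‑greedy‑type algorithm, and then round back to the lattice using the polymatroid structure. Concretely, let $\mathcal{P} = \{\bx \in \R_+^E \colon \bx(S) \le \rho(S)\ \forall S\}$ be the polymatroid polytope; I would apply Lemma~\ref{lem:decomposition} with $B_i := \rho(\{i\})$, which (since $\rho$ is integer valued) produces a ground set $E'$ with $|E'| = \sum_i t_i = O(n + \sum_i \log B_i)$, a submodular function $g\colon 2^{E'}\to\R_+$, and the linear map $M\colon \R^{E'}\to\R^E$ with $M(\by)_i = \sum_j a_{i,j} y_{i,j}$. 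Let $G\colon[0,1]^{E'}\to\R_+$ be the multilinear extension of $g$, and let $Q := \{\by \in [0,1]^{E'} \colon M(\by)\in\mathcal{P}\}$; note $Q$ is downward closed because $\mathcal{P}$ is. The relaxation to solve is $\max\{G(\by) \colon \by\in Q\}$.

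Next I would verify that the relaxation is solvable. The polytope $\mathcal{P}$ has a separation oracle: given $\bx$, minimizing the submodular function $S\mapsto \rho(S)-\bx(S)$ either certifies $\bx\in\mathcal{P}$ or returns a violated set, i.e.\ a separating hyperplane. Feeding this into Lemma~\ref{lem:sep-oracle} yields a polynomial‑time linear‑optimization oracle over $Q$, so the measured continuous greedy algorithm \cite{FeldmanNS11} (or continuous greedy \cite{Vondrak08,ChekuriVZ14} in the monotone case) returns, in time polynomial in $|E'|$, a point $\by^*\in Q$ with $G(\by^*) \ge \alpha\cdot\max_{\by\in Q} G(\by)$. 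Moreover this maximum is at least $\mathrm{OPT} := \max\{f(\bx)\colon \bx\in\mathcal{P}\cap\Z_+^E\}$: an optimal lattice point $\bx^\circ$ satisfies $x^\circ_i \le \rho(\{i\}) = B_i$, so Lemma~\ref{lem:decomposition} gives $\by^\circ\in\{0,1\}^{E'}$ with $M(\by^\circ)=\bx^\circ$, hence $\by^\circ\in Q$ and $G(\by^\circ)=g(\by^\circ)=f(\bx^\circ)=\mathrm{OPT}$. Thus $G(\by^*)\ge\alpha\cdot\mathrm{OPT}$.

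The remaining step is to round $\by^*$ to an integral feasible lattice point without losing value, and this is where I expect the real work to be. Set $\bx^* := M(\by^*)\in\mathcal{P}$. The key claim is that $G(\by^*)\le \hat f(\bx^*)$, where $\hat f$ denotes the extension of $f$ obtained by rounding each coordinate of its argument independently to one of the two nearest integers. Indeed, $G(\by^*) = \E[f(M(R(\by^*)))]$ is the expectation of $f$ under a product distribution whose $i$‑th marginal is integer supported with mean $x^*_i$; among all integer‑supported distributions with a given mean, the two‑nearest‑integers distribution is smallest in the convex order; and the diminishing‑returns property makes $k\mapsto f(\cdot+k\chi_i)$ concave, so replacing each marginal by its two‑nearest‑integers counterpart (one coordinate at a time, telescoping) can only increase $\E[f]$. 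With this in hand, $\bx^*\in\mathcal{P}$, $\mathcal{P}$ is a polymatroid polytope, and $\hat f$ is precisely the extension for which the lattice pipage/swap rounding of \cite{SY16} is lossless; applying it to $\bx^*$ produces $\hat\bx\in\mathcal{P}\cap\Z_+^E$ with $\E[f(\hat\bx)]\ge \hat f(\bx^*)\ge G(\by^*)\ge\alpha\cdot\mathrm{OPT}$, giving an $\alpha$‑approximation in the claimed running time.

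The two delicate points are (i) the convex‑order comparison $G(\by^*)\le \hat f(\bx^*)$, which genuinely uses the coordinatewise concavity implied by diminishing returns rather than mere lattice submodularity, and (ii) checking that each unit‑step exchange performed by the polymatroid rounding does not decrease $\E[f(\hat\bx)]$ — this relies on the lattice submodularity of $f$ to sign the relevant second differences, and is exactly the place where the black‑box submodular reduction no longer suffices and one must argue on the lattice directly, following \cite{SY16}. I would also double‑check the bookkeeping that makes $\sum_i\log B_i=\sum_i\log\rho(\{i\})$ polynomial in the input size, and that the $\alpha=1/e$ (non‑monotone) case of continuous greedy is compatible with the same rounding.
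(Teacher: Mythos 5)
Your proposal is correct and follows essentially the same route as the paper: reduction via Lemma~\ref{lem:decomposition}, separation through Lemma~\ref{lem:sep-oracle} plus (measured) continuous greedy, then a two-stage rounding in which your convex-order comparison $G(\by^*)\le \hat f(\bx^*)$ is exactly the paper's coordinatewise hybrid/Jensen argument using the concave piecewise-linear interpolation given by DR, and the final lossless rounding of the fractional part inside the residual matroid polytope is the same appeal to \cite[Lemma 13]{SY16} with pipage/swap rounding. The only cosmetic difference is attribution in your point (ii): what signs the exchanges is the submodularity of the set function $S\mapsto f(\bz+\chi_S)$, which follows from the DR property rather than from lattice submodularity per se.
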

\begin{proof}
	Let $\mathcal{P}$ be the polymatroid polytope. We apply Lemma~\ref{lem:decomposition} (or Corollary~\ref{cor:decomposition}) to obtain the multiset $\{a_{i, j}\}_{i, j}$ such that, for any vector $\bx$, there is a vector $\by$ such that $x_i = \sum_{j} y_{i, j} a_j$ for all $i$. Let $g: 2^{E'} \rightarrow \mathbb{R}_+$ be the submodular function given by the reduction. Let $G: [0, 1]^{E'} \rightarrow \mathbb{R}_+$ be the multilinear extension of $g$.
	
	Since we can separate over $\mathcal{P}$ in polynomial time using a submodular minimization algorithm, it follows from Lemma~\ref{lem:sep-oracle} that we can optimize any linear (in $\by$) objective over $\bx \in \mathcal{P}$ and $x_i \leq B_i$ for all $i \in E$, where $\bx=M(\by)$. Therefore, using the measured Continous Greedy algorithm, we can find an $\alpha$-approximate fractional solution to the problem $\max\{G(\by) \colon \bx \in \mathcal{P}, x_i \leq B_i \; \forall i \in E\}$, where $\alpha = 1 - 1/e$ for monotone functions and $\alpha = 1/e$ for non-monotone functions. Similarly to \cite{SY16}, we can round the resulting fractional solution without any loss in the approximation. Let $\bz \in \Z^{E}$ be defined as $z_i = \lfloor x_i \rfloor$. Define $H:[0,1]^E \to \R$ as $H(\bv) = \E_{R}[f(\bz+R(\bv))]$ for any $\bv\in \R^E$. Note that $H$ is the multilinear extension of a submodular function agreeing with $H$ on $\{0,1\}^E$.

Let $\bv\in \R^E$ be defined as $v_i = x_i - \lfloor x_i\rfloor$.
First one can show that $H(\bv) \ge G(\by)$ via a hybrid argument. Let $\bz^{(i)} \in \Z^{E}$ be a random integral vector whose first $i$ coordinates are distributed according to $M(R(\by))$ (that is, constructing a randomized rounding of $\by$ and then converting it to an integral vector in $\Z^E$) and the last $|E|-i$ coordinates are picked randomly among $\{z_i, z_i+1\}$ so that the expectation is $x_i$. Note that $\E[f(\bz^{(0)})] = H(\bv)$ and $\E[f(\bz^{(|E|)})] = G(\by)$ and we will show that $\E[f(\bz^{(i-1)})] \ge \E[f(\bz^{(i)})]$. Indeed, for all $j\ne i$, $\bz^{(i)}_j$ and $\bz^{(i-1)}_j$ are identically distributed so we can couple the randomness so that $\bz^{(i)}_j = \bz^{(i-1)}_j~\forall j\ne i$. Let $\bw$ be $\bz^{(i)}$ with the $i$th coordinate zeroed out and define a single variable function $g:\Z\to \R$ where $g(x) = f(\bw+x\cdot e_i)$. Define $g':\R\to \R$ be the piecewise linear function agreeing with $g$ on integral points and $g'$ does not have any break points other than integral points. By the DR property of $f$, we have that $g'$ is concave. Thus, $\E[g'(\bz^{(i)}_i)] \le g'(\E[\bz^{(i)}_i])$. On the other hand, because $g'$ is linear in $[z_i, z_i+1]$, we have $\E[g'(\bz^{(i-1)}_i)] = g'(\E[\bz^{(i-1)}_i]) = g'(\E[\bz^{(i)}_i])$.

Next as done in \cite[Lemma 13]{SY16}, one can show that the constraints $\bv\in [0,1]^E, \bz+\bv \in \mathcal{P}$ are equivalent to a matroid polytope. Thus, one can round $\bv$ to an integral vector without losing any value in $H$ using strategies such as pipage rounding or swap rounding.
\end{proof}
\bibliography{dr}
\bibliographystyle{plain}
\end{document}